\documentclass[conference]{IEEEtran}
\IEEEoverridecommandlockouts

\usepackage[english]{babel}
\usepackage{cite}
\usepackage{amsmath,amssymb,amsfonts}
\usepackage{algorithmic}
\usepackage{amsthm}
\usepackage{graphicx}
\usepackage{textcomp}
\usepackage{xcolor}
\usepackage{tikz}
\usetikzlibrary{patterns}
\usepackage{enumitem}
\usepackage{pgfplots}
\pgfplotsset{compat=1.18}
\usetikzlibrary{pgfplots.groupplots}
\usetikzlibrary{shapes.geometric, arrows.meta, positioning}
\def\BibTeX{{\rm B\kern-.05em{\sc i\kern-.025em b}\kern-.08em
    T\kern-.1667em\lower.7ex\hbox{E}\kern-.125emX}}

\definecolor{DarkGreen}{RGB}{0,150,0}

\newtheorem{theorem}{Theorem}
\newtheorem{lemma}[theorem]{Lemma}

\begin{document}

\newcommand{\bx}{\mathbf{x}}
\newcommand{\bX}{\mathbf{X}}
\newcommand{\by}{\mathbf{y}}
\newcommand{\bY}{\mathbf{Y}}
\newcommand{\bs}{\mathbf{s}}
\newcommand{\ba}{\mathbf{a}}
\newcommand{\br}{\mathbf{r}}
\newcommand{\bg}{\mathbf{g}}
\newcommand{\bn}{\mathbf{n}}
\newcommand{\bN}{\mathbf{N}}
\newcommand{\bw}{\mathbf{w}}
\newcommand{\bz}{\mathbf{z}}
\newcommand{\bh}{\mathbf{h}}
\newcommand{\boldg}{\mathbf{g}}
\newcommand{\bH}{\mathbf{H}}
\newcommand{\bG}{\mathbf{G}}
\newcommand{\bA}{\mathbf{A}}
\newcommand{\bB}{\mathbf{B}}
\newcommand{\bR}{\mathbf{R}}
\newcommand{\bC}{\mathbf{C}}
\newcommand{\bI}{\mathbf{I}}
\newcommand{\bQ}{\mathbf{Q}}
\newcommand{\bGamma}{\mathbf{\Gamma}}
\newcommand{\bgamma}{\gamma}

\title{Resilient Vital Sign Monitoring \\
Using RIS-Assisted Radar
\thanks{This work has been performed in the context of the LOEWE center emergenCITY [LOEWE/1/12/519/03/05.001(0016)/72].\\
We thank Ehsan Mohammadi and Hoang Oanh Pham for assisting with the vital sign recordings conducted during their master's and bachelor's thesis, respectively.}
}

\author{Christian Eckrich\textsuperscript{*,**}, Abdelhak M. Zoubir\textsuperscript{*}, and Vahid Jamali\textsuperscript{**}\\
\textsuperscript{*}Signal Processing Group, \textsuperscript{**}Resilient Communication Systems\\ Technische Universität Darmstadt, Germany\\
}

\maketitle

\begin{abstract}
Vital sign monitoring plays a critical role in healthcare and well-being, as parameters such as respiration and heart rate offer valuable insights into an individual's physiological state. While wearable devices allow for continuous measurement, their use in settings like in-home elderly care is often hindered by discomfort or user noncompliance. As a result, contactless solutions based on radar sensing have garnered increasing attention. This is due to their unobtrusive design and preservation of privacy advantages compared to camera-based systems. However, a single radar perspective can fail to capture breathing-induced chest movements reliably, particularly when the subject's orientation is unfavorable. To address this limitation, we integrate a reconfigurable intelligent surface (RIS) that provides an additional sensing path, thereby enhancing the robustness of respiratory monitoring. We present a novel model for multi-path vital sign sensing that leverages both the direct radar path and an RIS-reflected path. We further discuss the potential benefits and improved performance our approach offers in continuous, privacy-preserving vital sign monitoring.
\end{abstract}

\section{Introduction}
Continuous monitoring of vital signs like respiration and heart rate is crucial for early detection of life-threatening events. While intensive care units (ICUs) use advanced systems, such equipment is often lacking in general hospital units due to cost and complexity. This can delay the response to critical incidents. 
There is a clear need for non-contact, affordable, and reliable monitoring solutions that work continuously without causing patient discomfort. 

Radar-based sensing offers a promising solution for non-invasive vital sign monitoring by detecting subtle movements caused by respiration and heartbeat \cite{paterniani_radar-based_2022, mercuri_enabling_2021, eder_sparsity_2022, wang_mmhrv_2021}. However, results can become unreliable due to two main challenges: (1) physical obstructions such as medical equipment or bed positioning that block the direct path, and (2) poor alignment with the patient's chest caused by posture, leading to weak measurements. One way to address this issue is to deploy multiple radars, i.e., distributed radar networks (DRNs) \cite{ren_vital_2021}. Besides the additional cost of radar systems, DRNs require a fronthaul link to share their data for joint processing \cite{eckrich2024fronthaul}. We propose the use of reconfigurable intelligent surfaces (RISs) to overcome these issues. RIS can reflect and steer radar signals to create multiple paths, ensuring higher signal quality and accurate measurements \cite{tripathy_liquid_2025, mercuri_reconfigurable_2023}. This effectively enables distributed sensing with only one active radar, with RIS pathways providing spatially diverse perspectives. To the best of the authors' knowledge vital sign monitoring using RIS-assisted radar has not been explored yet.  
\begin{figure}
    \centering
    \resizebox{0.5\linewidth}{!}{
        \input{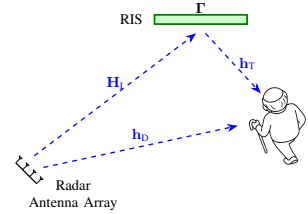}
    }
    \caption{Radar-based vital sign estimation assisted by an RIS.}
    \label{fig:SetupSketch}
\end{figure}
The main contributions of this paper are listed below as: 
\begin{itemize}[wide = 0pt]
    \item The dependence of the received vital sign-modulated radar signal on the angle of reflection from the human target is a complex phenomenon that cannot be fully captured by analytical models alone. To address this, we develop a hybrid experimental-analytical approach. Specifically, we model the well-known RIS-assisted multipath channel analytically, where we characterize the angle-dependent radar cross section (RCS) of the target (containing the vital sign information) through real-world measurements obtained from our experimental radar platform.
    \item Furthermore, to enable simultaneous illumination of the direct and RIS-assisted paths, we derive a closed-form minimum-norm transmit beamformer that fulfils two independent steering constraints while respecting the radar's power budget.
    We propose three algorithms for extracting vital signs from both the direct path and the RIS-generated path: (i) spatial-separation: in this algorithm, the radar transmits simultaneously towards both paths using appropriate beamformers. (ii) temporal separation: here, the radar alternates between the two paths in consecutive time slots to exploit both paths periodically. (iii) opportunistic access: this algorithm dynamically selects the best path at each moment depending on factors such as the target's posture and orientation. We introduce the necessary signal processing techniques for vital sign extraction, including receive beamforming, clutter removal, and phase demodulation.
    \item Finally, we present an extensive set of results evaluating the sensing performance of the proposed algorithms. These studies confirm that the path diversity enabled by RIS significantly enhances the accuracy of vital sign estimation. Furthermore, the results suggest that additional performance improvements are achievable through optimized resource allocation, including power, time, and path selection.
\end{itemize}

\section{System Model}

As depicted in Fig. \ref{fig:SetupSketch}, we consider a radar system consisting of $M$ trans-receivers that emit a sensing signal $\bX \in \mathbb{C}^{M\times K}$ of $K$ temporal samples towards a target and collect the backscattered signal $\bY \in \mathbb{C}^{M\times K}$.
We assume that self-interference is negligible, which can be achieved through full duplex radars \cite{zhang_self-interference_2022} or by using different but collocated transmit/receive antennas \cite{ahmed_all-digital_2015}. To provide path diversity, an RIS with $N$ unit-cells is deployed. 
The signal model is then given by
\begin{align}
    \bY = \bH(\alpha, \beta) \, \bX + \bN, 
\end{align}
where $\bY \in \mathbb{C}^{M\times K}$ is the received signal matrix and $\bN \in \mathbb{C}^{M\times K}$ is the sensor noise modelled as additive white Gaussian noise. Moreover, $\bH \in \mathbb{C}^{M\times M}$ denotes the end-to-end channel matrix between radar to a point target and back (including the impact of the RIS). Parameters $\alpha$ and $\beta$ explicitly indicate the target RCS observed through the direct path and the RIS-generated path, respectively. In particular, $\bH(\alpha, \beta)$ can be written as 
\begin{align}
    \bH(\alpha, \beta) = &\bH_\text{I}^\top\bGamma^\top\bh_\text{T}^\top \alpha \bh_\text{T} \bGamma \bH_\text{I}
    + \bh_\text{D}^\top \beta \bh_\text{D} + \bH_\text{C},
\end{align}
where $\bH_\text{I}\in \mathbb{C}^{M\times N}$ is the channel matrix between the radar and the RIS, $\bh_\text{T}\in \mathbb{C}^{N}$ is the channel vector between the RIS and the point target, and $\bh_\text{D}\in \mathbb{C}^{M}$ is the channel vector for the direct path. Moreover, $\bGamma\in \mathbb{C}^{N\times N}$ is a diagonal matrix representing the RIS reflection coefficients, i.e., $\bGamma = \text{diag}([e^{j\phi_1}, \hdots, e^{j\phi_N}])$, where $\phi_n$ is the phase shift applied by the $n$-th element of the RIS. Furthermore, $\bH_\text{C}\in \mathbb{C}^{M\times M}$ denotes the impact of objects cluttering the environment. The superscript $(\cdot)^T$ denotes transpose.
The channel components $\bH_\text{I}$, $\bh_\text{T}$, and $\bh_\text{D}$ are modeled as Rician fading channels with a Rician factor $K$, i.e.,
\begin{align}
    \sqrt{\frac{K}{K+1}}\bH_\text{LoS} + \sqrt{\frac{1}{K+1}}\bH_\text{nLoS}, 
\end{align}
where $\bH_\text{nLoS}$ is the nLoS component and its entries are distributed as $\mathcal{CN}(0,\bI)$, while $\bH_\text{LoS}$ is the LoS component of $\bH_\text{I}$, $\bh_\text{T}$, and $\bh_\text{D}$.
\subsection{RCS Model}
The complex RCSs of a person observed from the direct path and the RIS-assisted path are denoted by $\alpha$ and $\beta$, respectively. Periodic expansions and contractions of the thoracic cage during respiration modulate the phase of the back-scattered radar signal, leading to a variation along the slow-time scale. Note that, the intensity of the chest displacement during breathing is not uniform across all observation angles.
The exact influence of the incident angle is complex, as will be seen in the experimental section. It depends on the patient's shape, posture, breathing behavior, and the sensing signal itself. However, in general, we can observe that the breathing can be best observed when the radar is aligned with the chest of the person. With an increasing observation angle, the breathing signal becomes weaker and more difficult to detect. Observing the patient from the side leads to a complete loss of the breathing signal. Fig. \ref{fig:Respirtation_time_plot} plots the extracted chest displacement signals over time from our experimental platform. In the top plot, the front facing radar trace (dashed) is overlaid with that from a second radar at a similar incidence angle (solid). The two curves almost coincide, demonstrating virtually identical signal quality. In the bottom plot, the front facing trace (dashed) is compared with a radar at a much larger observation angle (solid). Here, the secondary signal is noticeably attenuated and distorted, and its characteristic respiratory oscillations are no longer visible.

For simplicity, we assume that the sensing signal is reflected from the person's chest at one scattering point. 
To explicitly model the slow variations of the RCS due to respiration, we introduce the subscript $l$, where $\alpha_l$ and $\beta_l$ denote the target RCS at the $l$-th sample taken on a slow-time scale.
The complex RCS of the target can be modeled as
\begin{align}
    \alpha_l = q_\alpha e^{j\frac{2 \pi}{\lambda} d_\alpha[l]}, \quad \beta_l = q_\beta e^{j\frac{2 \pi}{\lambda} d_\beta[l]},
\end{align}
where $q_\alpha$ and $q_\beta$ are the reflectivity coefficients of the target for the RIS and direct path, respectively. Moreover, $d_\alpha[l]$ and $d_\beta[l]$ are the distance variations due to respiration seen from both perspectives, as illustrated in Fig. \ref{fig:Respirtation_time_plot}.
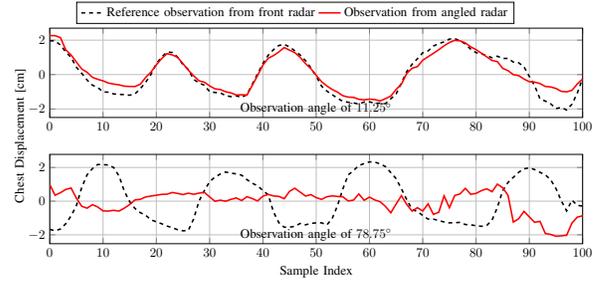
\begin{figure}
    \centering
    \resizebox{0.9\linewidth}{!}{
        \begin{tikzpicture}
\begin{groupplot}[
    group style={
        group size=1 by 2,       
        horizontal sep=0.5cm,        
    },
    width=16cm,
    height=4cm,
    xmin=0, xmax=100,
    grid=both,
    legend style={at={(0.05,1.05)},anchor=south west,legend columns=-1},
]

\nextgroupplot[
    title={Observation angle of $11.25^\circ$}, 
    title style={at={(0.5,0.01)},anchor=center},
    ylabel={Chest Displacement [cm]},
    y label style={at={(axis description cs:0,-0.20)},anchor=south,yshift=5mm}
]
\addplot[black, dashed, very thick]
    table [
        x expr=\thisrow{index}-70,
        y=front_radar_VS,
        col sep=comma,
        restrict x to domain=0:100
    ] {figures/data/respiration_time_1.csv};
\addlegendentry{\shortstack{Reference observation from front radar}}

\addplot[red, very thick]
    table [
        x expr=\thisrow{index}-70,
        y=side_radar_VS,
        col sep=comma, 
        restrict x to domain=0:100
    ] {figures/data/respiration_time_1.csv};
\addlegendentry{Observation from angled radar}

\nextgroupplot[
    title={Observation angle of $78.75^\circ$},
    xlabel={Sample Index},
    title style={at={(0.5,0.01)},anchor=center},
]
\addplot[black, dashed, very thick]
    table [
        x expr=\thisrow{index}-1,
        y=front_radar_VS,
        col sep=comma, 
        restrict x to domain=0:100
    ] {figures/data/respiration_time_7.csv};

\addplot[red, very thick]
    table [
        x expr=\thisrow{index}-1,
        y=side_radar_VS,
        col sep=comma, 
        restrict x to domain=0:100
    ] {figures/data/respiration_time_7.csv};
\end{groupplot}
\end{tikzpicture}
    }
    \caption{Chest displacement observed from small and large incident angle with respect to the chest's normal.}
    \label{fig:Respirtation_time_plot}
\end{figure} 
\section{Proposed Method}
In this section, we first detail the radar transmission design, covering the sensing waveform, dual-path beamforming, and power/time-allocation strategies. Subsequently, we describe the complete signal processing chain used to extract vital signs from the received echoes.

\subsection{Sensing Signal}
\label{SensingSignalAndBeamformer}
The adopted sensing signal is a pulsed sinusoidal wave of duration $T_p$ at frequency $f_0$ with bandwidth $1/T_p$. For practical reasons, it is necessary for the sensing signal to be narrowband since the RIS response time is much slower than the sweep time.
The signal is transmitted by the radar's transmit antennas at a pulse repetition interval of $T_\text{PRI}$. Hence, the sensing signal $s(t)$ can be modeled as
\begin{align}
    s(t) = \sqrt{2}\cos(2\pi f_0 t).
\end{align}
The sensing signal is sampled at a rate of $f_s$ so that the fast-time sweep $T_p$ consists of $K$ samples, i.e., $T_p = K/f_s$. The sampled sensing signal is then arranged in the vector $\bs \in \mathbb{C}^{K}$. Using a precoder $\bw \in \mathbb{C}^{M}$, the emitted sensing signal at the $M$ antenna elements of the radar array is given by 
\begin{align}
    \bX = \bw \bs^{H},
\end{align}
where the superscript $(\cdot)^H$ denotes Hermitian operation.
\subsection{Beamforming Design for Dual-Path Radar Transmission}
The direct path usually dominates and masks the signal from the RIS path because it experiences less path loss. To still exploit the spatial diversity of both paths, the transmit precoder $\bw$ must simultaneously control the array response in two distinct directions: towards the direct path and towards the RIS. This control allows either (i) exclusive focusing on one path (temporal separation), (ii) power allocation to both paths (spatial separation), or (iii) adaptive switching (opportunistic selection). 
The optimal precoder that minimizes the transmit power while maintaining the array response $|\ba_i^H\bw|$ towards the steering direction $\theta_i$ at $\gamma_i \in \mathbb{R}$ can be found by solving
\begin{subequations}
\begin{align}
    \textbf{P1:}& \quad \min_{\bw} \; \bw^H\bw\\
    \text{s.t.}& \quad |\ba_1^H\bw| = \gamma_1, \quad |\ba_2^H\bw| = \gamma_2
\end{align}
\end{subequations}
where $\ba_1$ and $\ba_2$ are the radar array responses. For a uniform linear array $\ba_i$, for $i = 1,2$, is given by 
\begin{align}
    \ba_i = \frac{1}{\sqrt{M}}
    \begin{bmatrix}
        1\\
        \exp{\left(j2\pi \frac{1}{\lambda_0}\delta \sin\theta_i\right)}\\
        \vdots\\
        \exp{\left(j2\pi \frac{1}{\lambda_0} (M-1) \delta \sin\theta_i\right)}
    \end{bmatrix}, 
\end{align}
where $\lambda_0$ is the wavelength of the radar signal and $\delta$ is the inter-element spacing of the antenna array. This problem can be reformulated as 
\begin{subequations}
\begin{align}
    \textbf{P2:}& \quad \min_{\bw, \phi_1, \phi_2} \; \bw^H\bw\\
    \text{s.t.}& \quad \ba_1^H\bw = \gamma_1 e^{j\phi_1}, \quad \ba_2^H\bw = \gamma_2 e^{j\phi_2},
\end{align}
\end{subequations}
where $\phi_1$ and $\phi_2$ denote arbitrary phase terms that account for the unknown phase of the array responses.
\begin{lemma}
    For a given $\phi_1$ and $\phi_2$, the optimal beamformer that satisfies the constraints above while minimizing the transmit power is
    \begin{align}
        \bw = \bA (\bA^H \bA)^{-1} \bg
    \end{align}
    where $\bA = [\ba_1, \ba_2]\in \mathbb{C}^{M\times 2}$ and $\bg = [\gamma_1 e^{j \phi_1}, \gamma_2 e^{j \phi_2}]^T$. \label{lem1}
\end{lemma}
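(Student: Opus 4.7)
The plan is to recognize \textbf{P2} (for fixed $\phi_1,\phi_2$) as a classical minimum-norm solution of a complex linear system. First I would stack the two scalar constraints into the single matrix equation $\bA^H \bw = \bg$, so the problem becomes $\min_{\bw} \bw^H\bw$ subject to $\bA^H \bw = \bg$. Because the two look directions satisfy $\theta_1 \neq \theta_2$ and $M \geq 2$, the steering vectors $\ba_1, \ba_2$ are linearly independent, so $\bA$ has full column rank and the $2\times 2$ Gram matrix $\bA^H\bA$ is invertible. Verifying this rank condition is the only non-trivial structural step and is the point I would flag as the main obstacle in any generalization to more than two constraints or to highly correlated look angles.

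Next I would use an orthogonal decomposition $\bw = \bw_{\|} + \bw_{\perp}$, with $\bw_{\|}$ in the column space of $\bA$ and $\bw_{\perp}$ in its orthogonal complement, which coincides with the nullspace of $\bA^H$. By orthogonality, $\bw^H\bw = \bw_{\|}^H\bw_{\|} + \bw_{\perp}^H\bw_{\perp}$, while the constraint reduces to $\bA^H\bw_{\|} = \bg$ since $\bA^H\bw_{\perp} = 0$. Thus $\bw_{\perp}$ is unconstrained but strictly inflates the cost, so any optimum must have $\bw_{\perp} = 0$. Parametrizing the remaining part as $\bw_{\|} = \bA\bz$ for some $\bz \in \mathbb{C}^2$ and inserting into $\bA^H\bw_{\|} = \bg$ gives $(\bA^H\bA)\bz = \bg$. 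Inverting the Gram matrix yields $\bz = (\bA^H\bA)^{-1}\bg$ and hence $\bw^\star = \bA(\bA^H\bA)^{-1}\bg$, which is the claimed expression.

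As an independent check I would re-derive the same formula from the KKT conditions. Introducing a vector of complex Lagrange multipliers $\nu \in \mathbb{C}^2$, the Lagrangian reads $L(\bw,\nu) = \bw^H\bw + \nu^H(\bA^H\bw - \bg) + (\bA^H\bw - \bg)^H \nu$. The Wirtinger stationarity condition $\partial L/\partial \bw^* = \bw + \bA\nu = 0$ gives $\bw = -\bA\nu$; substituting into the primal constraint determines $\nu = -(\bA^H\bA)^{-1}\bg$, reproducing $\bw^\star = \bA(\bA^H\bA)^{-1}\bg$. Since the quadratic objective is strictly convex and the constraints are affine, the feasible set is a non-empty affine subspace and this stationary point is the unique global minimizer, completing the argument.
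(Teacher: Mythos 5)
Your proof is correct. Your second derivation---the Lagrangian/KKT argument with $\bw=-\bA\nu$, substitution into $\bA^H\bw=\bg$, and inversion of the Gram matrix---is essentially identical to the paper's own proof, which forms the same Lagrangian, obtains $\bw=\bA\boldsymbol{\lambda}$ from the Wirtinger stationarity condition, and solves $\bA^H\bA\boldsymbol{\lambda}=\bg$. Your primary argument via the orthogonal decomposition $\bw=\bw_{\|}+\bw_{\perp}$ is a genuinely different (and in one respect stronger) route: it shows directly that any component in the nullspace of $\bA^H$ inflates the norm without affecting feasibility, which establishes global optimality and uniqueness constructively, whereas the paper's Lagrangian derivation only produces a stationary point and then appeals to the known minimum-norm property of the Moore--Penrose pseudoinverse to conclude optimality. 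Both hinge on the same structural fact you correctly flag---linear independence of $\ba_1$ and $\ba_2$ for $\theta_1\neq\theta_2$, which the paper also invokes to justify invertibility of $\bA^H\bA$---so the two proofs are interchangeable here; your decomposition argument would generalize more transparently to additional steering constraints, at the cost of being slightly longer.
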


\begin{proof}
    We minimize $\lVert\bw\rVert^2$ subject to the two linear constraints
    $\bA^{H}\bw=\bg$.  
    Introduce the Lagrange vector $\boldsymbol\lambda\in\mathbb{C}^{2}$ and form
    \begin{align}
        \mathcal{L}(\bw,\boldsymbol\lambda)=
    \bw^{H}\bw+\boldsymbol\lambda^{H}\!\bigl(\bg-\bA^{H}\bw\bigr)
        +\bigl(\bg^{H}-\bw^{H}\bA\bigr)\boldsymbol\lambda .
    \end{align} 
    Setting $\partial\mathcal{L}/\partial\bw^{*}=0$ gives
    $\bw=\bA\boldsymbol\lambda$.
    Substituting $\bw$ into $\bA^{H}\bw=\bg$ yields 
    \begin{align}
        \bA^{H}\bA\,\boldsymbol\lambda=\bg .
    \end{align}
    Because the two steering vectors are linearly independent,
    $\bA^{H}\bA$ is invertible, so
    $\boldsymbol\lambda=(\bA^{H}\bA)^{-1}\bg$.
    Inserting this back into $\bw=\bA\boldsymbol\lambda$ gives
    \begin{align}
        \bw=\bA(\bA^{H}\bA)^{-1}\bg
    \end{align}
    This expression is the Moore Penrose pseudoinverse solution, which is the unique solution for the constraint with minimum Euclidean norm. Hence, $\mathbf{w}$ is optimal. This completes the proof.
\end{proof}
\begin{lemma} \label{lem2}
    The optimal phases $\phi_1$ and $\phi_2$ are given by $\Delta \phi = \phi_1-\phi_2 = -\angle a_c$, where $a_c = \ba_1^H \ba_2$ is the steering vector correlation. Moreover, the minimum power is given by
    \begin{align}
        P_\text{min} = \frac{\gamma_1^2 + \gamma_2^2 - 2\gamma_1 \gamma_2 |a_c|}{1-|a_c|^2}.\label{power}
    \end{align}
\end{lemma}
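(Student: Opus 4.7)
The plan is to plug the optimal beamformer from Lemma \ref{lem1} back into the objective, reduce the problem to a scalar optimization over the phase difference $\phi_1-\phi_2$, and then read off both $\Delta\phi$ and $P_\text{min}$ in closed form. The starting observation is that, since $\bw=\bA(\bA^H\bA)^{-1}\bg$ is a pseudoinverse solution, the transmit power collapses to the compact quadratic form $\bw^H\bw=\bg^H(\bA^H\bA)^{-1}\bg$. Only $\bg$ carries the phase variables, so everything reduces to evaluating this quadratic form.

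Next, I would compute $\bA^H\bA$ explicitly. Because the steering vectors are normalized so that $\|\ba_i\|=1$, the Gram matrix is the $2\times 2$ Hermitian matrix with unit diagonal and off-diagonal entries $a_c$ and $a_c^*$. Its determinant is $1-|a_c|^2$, which is strictly positive under the linear independence assumption already invoked in Lemma \ref{lem1}, and the $2\times 2$ inverse has the standard closed form. Substituting $\bg=[\gamma_1 e^{j\phi_1},\gamma_2 e^{j\phi_2}]^T$ and expanding then yields
\begin{align}
\bw^H\bw=\frac{\gamma_1^2+\gamma_2^2-2\gamma_1\gamma_2\,\mathrm{Re}\!\left(a_c\, e^{j(\phi_2-\phi_1)}\right)}{1-|a_c|^2}.
\end{align}

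Only the cross term depends on $\phi_1,\phi_2$, and it does so only through the difference $\Delta\phi=\phi_1-\phi_2$. Writing $a_c=|a_c|e^{j\angle a_c}$ turns the real part into $|a_c|\cos(\angle a_c-\Delta\phi)$, so minimizing $\bw^H\bw$ is equivalent to maximizing this cosine to the value $1$. This gives $\Delta\phi=\angle a_c$ (up to the sign convention on $a_c$ which fixes the statement $\Delta\phi=-\angle a_c$), i.e. the phases are aligned so that the two steering constraints constructively reinforce rather than partially cancel. Plugging $\cos(\cdot)=1$ back into the expression immediately produces the formula \eqref{power}.

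I do not expect a serious obstacle here: the argument is essentially a Gram-matrix computation followed by a single-variable trigonometric minimization. The only points that require care are (i) keeping track of the complex conjugates so that the cross term is correctly identified as $-2\gamma_1\gamma_2|a_c|\cos(\cdot)/(1-|a_c|^2)$, (ii) verifying that $|a_c|<1$ so the denominator is positive and the optimum is a minimum rather than a saddle, and (iii) matching the sign of $\Delta\phi$ to the convention used for $a_c=\ba_1^H\ba_2$ in the lemma statement.
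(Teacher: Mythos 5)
Your proposal is correct and follows essentially the same route as the paper: substitute the minimum-norm solution to obtain $\bw^H\bw=\bg^{H}(\bA^{H}\bA)^{-1}\bg$, invert the $2\times 2$ Gram matrix, and minimize the single remaining cosine term over $\Delta\phi$. One small note: your conjugate bookkeeping correctly yields the cross term $\Re\bigl(a_c e^{-j\Delta\phi}\bigr)$, whose maximizer is $\Delta\phi=+\angle a_c$ rather than the $-\angle a_c$ stated in the lemma (the paper's own proof writes $\Re\bigl(e^{j\Delta\phi}a_c\bigr)$, apparently dropping a conjugate); this sign discrepancy does not affect the value of $P_\text{min}$.
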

\begin{proof}
    Calculating the total power of the derived minimum norm solution from Lemma \ref{lem1}, we obtain 
    \begin{align}
        \bw^H\bw &=  \left(\bA(\bA^H\bA)^{-1} \boldg\right)^H\bA(\bA^H\bA)^{-1} \boldg\\
        &= \boldg^H(\bA^H\bA)^{-1} \boldg.
    \end{align}
    The inverse of the Hermitian Gram matrix $\bA^H\bA$ can be written as
    \begin{align}
        \left(\bA^{H}\bA\right)^{-1} &= 
        \begin{bmatrix}
            1&a_{c}\\
            a^*_{c}&1
        \end{bmatrix}^{-1} \\&= 
        \frac{1}{1-|a_{c}|^2}
        \begin{bmatrix}
            1&-a_{c}\\
            -a^*_{c}&1
        \end{bmatrix},\label{Gram}
    \end{align}
    where $a_c$ is the cross correlation of the steering vectors. Inserting \eqref{Gram} into the expression for $\bw^H\bw$ above yields
    \begin{align}
        \bw^H\bw &= \frac{\gamma_1^2+\gamma_2^2 -2\gamma_1\gamma_2 \Re\left[e^{j\Delta\phi}a_{c}\right]}{1-|a_{c}|^2}, \label{power2}
    \end{align}
    where $\Delta \phi = \phi_1-\phi_2$ and $\Re\left[\cdot\right]$ is the real part operator.
    The optimization variables $\phi_1$ and $\phi_2$ can be chosen such that $e^{j\Delta\phi}a_{c}$ is real valued which consequently minimizes the expression in \eqref{power2}, i.e., $-\angle a_c$. Hence, the minimum total transmit power is given by \eqref{power}. This completes the proof.
\end{proof}
Lemmas \ref{lem1} and \ref{lem2} provide the optimal precoder that minimizes the total transmit power while controlling the power allocation in both directions through the parameters $\gamma_1$ and $\gamma_2$. Alternatively, if the total transmit power is fixed to $P_\text{total}$, we can parameterize the power split as $\gamma_1 = s\gamma$ and $\gamma_2 = s(1-\gamma)$, where $\gamma \in (0,1)$ and the scaling factor $s$ is given by
\begin{align}
    s= \sqrt{\frac{P_\text{total}\left(1-|a_c|^2\right)}{1-2\gamma\left(1+|a_c|\right)+2\gamma^2\left(1+|a_c|\right)}}.
\end{align}
With this parameterization, the resulting precoder for an arbitrary $\gamma$ becomes
\begin{align}
    \bw = s\bA(\bA^H\bA)^{-1} \left[\gamma e^{-j\arg(a_c)},(1-\gamma)\right]^T.\label{beamformer}
\end{align}

\subsection{Sensing Strategies}
Utilizing the derived two path precoder, we propose three strategies on how to divide the available sensing power between the direct and the RIS-assisted sensing path. Hence, in the sequel $\theta_1 = \theta_\text{D}$ represents the direction directly towards the target, whereas $\theta_2 = \theta_\text{RIS}$ represents the direction towards the RIS, resulting in $\bA = [\ba_\text{D},\ba_\text{RIS}]$ where the columns are the steering vectors directed towards $\theta_\text{D}$ and $\theta_\text{RIS}$, respectively.
\paragraph{Temporal Separation} In this approach, the radar alternates between beamforming towards the direct link and the RIS-assisted link over the slow-time dimension. The temporal separation transmit beamformer is therefore given by
\begin{align}
    \bw[l] = 
    \begin{cases}
        \bw_\text{D} = \bA\bigl(\bA^H\bA\bigr)^{-1} 
        \begin{bmatrix}
            s&0
        \end{bmatrix}^T
        & \text{if} \; l\in \mathcal{T}_\text{D}, \\ 
        \bw_\text{RIS} = \bA\bigl(\bA^H\bA\bigr)^{-1} 
        \begin{bmatrix}
            0&s
        \end{bmatrix}^T & \text{if} \; l\in \mathcal{T}_\text{RIS},
    \end{cases}
    \label{tempsep}
\end{align}
where $\mathcal{T}_\text{D}$ and $\mathcal{T}_\text{RIS}$ represent the sets of slow-time samples within the time sections allocated towards the direct and RIS paths, respectively.

\paragraph{Spatial Separation}
In this algorithm, the radar simultaneously transmits towards the direct path and the RIS-generated path. Therefore, the composite beamformer can be constructed as given in \eqref{beamformer}. The parameter $\gamma$ determines the power portion allocated towards the direct path.

\paragraph{Opportunistic Selection}
In this case, we choose only one of the direct or RIS-generated paths by setting $\gamma = 0$ and $\gamma = 1$, respectively. Based on the quality of vital sign extracted previously, i.e., we stay with the currently selected path until the quality of the extracted vital sign estimates drops, then we switch to the other path. 

Each of the three proposed strategies has advantages and disadvantages. The temporal separation approach achieves the highest possible SNR for the active path by focusing all transmit power on it. This benefit is offset by a reduced observation window for each path, which can degrade the frequency resolution of the breathing signal or limit the maximum detectable frequency as per the Nyquist criterion. In contrast, spatial separation ensures continuous information from both links by transmitting to them simultaneously, thereby avoiding the time-frequency limitations of the temporal method. However, this requires splitting the total power, which can lower the SNR for each path. Finally, opportunistic selection offers an adaptive solution that maximizes measurement quality at any given time. The disadvantage of this method is the formulation of the criterion to initiate the switch, the possibility of wrong path selection, and the overhead of path selection. 

\subsection{Vital Sign Extraction}
\label{VitalSignExtraction}
In what follows, we introduce the processing chain that converts the recorded radar data into two one-dimensional respiration signals, representing the observation via the RIS path and the direct path.
\subsubsection{Data acquisition}
The receive signal $\Tilde{\by}_{m} \in \mathbb{C}^{K}$ at each of the $M$ antennas is matched filtered with the transmit signal for each slow-time sample $l$ as
\begin{align}
    \Tilde{y}_{m}(l) = \Tilde{\by}_{m}(l)\, \bs^H.
\end{align}
The resulting signal $\Tilde{y}_{m}(l)$ is then arranged in the data matrix $\Tilde{\bY} \in \mathbb{C}^{M \times L}$.
\subsubsection{Clutter suppression}
Static reflections are removed by a slow-time high‑pass filter of width $W$ as
\begin{align}
    \bY[m,l]=\Tilde{\bY}[m,l]-\frac{1}{W}\sum_{w=-\frac{W-1}{2}}^{\frac{W-1}{2}} \Tilde{\bY}[m,l+w],
\end{align}
resulting in the filtered signal matrix $\bY \in \mathbb{C}^{M \times L}$.

\subsubsection{Path separation by beamforming}\label{Step4} 
The decoded receive signal vector $\br \in \mathbb{C}^L$ is given by
\begin{align}
    \br^{(\text{D})}= \mathbf w_\mathrm{D}^{\mathrm H}\,\bY,
\qquad
\br^{(\text{RIS})}= \mathbf w_\mathrm{RIS}^{\mathrm H}\,\bY,
\end{align}
where $\br = [r_1, \hdots r_l \hdots r_L]^T$ and the receive beamfromers $\bw_\text{D}$ and $\bw_\text{RIS}$ are given in \eqref{tempsep}.

\subsubsection{Phase demodulation} 
Unwrapping the argument gives the slow-time phase sequences
\begin{align}
    \varphi_\mathrm{D}[l]=\angle\bigl(\br^{(\text{D})}[l]\bigr),\qquad
    \varphi_\mathrm{RIS}[l]=\angle\bigl(\br^{(\text{RIS})}[l]\bigr).
\end{align}
After removing linear trends caused by unwrapping, the chest displacement is obtained as
\begin{align}
    d_\mathrm{D}[l]=\frac{1}{2}\cdot\frac{\lambda_0}{2\pi}\,\varphi_\mathrm{D}[l],\qquad
    d_\mathrm{RIS}[l]=\frac{1}{2}\cdot\frac{\lambda_0}{2\pi}\,\varphi_\mathrm{RIS}[l].
\end{align}
The factor of $\frac{1}{2}$ is due to the fact that the signal travels distance $d_\text{D}$ or $d_\text{RIS}$ twice. 

\subsubsection{Evaluation and reconfiguration}
Each chest displacement sequence is transformed into the frequency domain, and its power spectrum is searched for a dominant respiratory component. The sensing branch whose spectrum provides the tallest and most isolated respiration peak is granted a larger share of sensing resources in the next data acquisition period. If neither sensing path provides a clear respiration peak, the patient's position estimate is repeated and the RIS phase shifts, and steering vectors updated accordingly.

Fig. \ref{fig:Method} depicts an overview of the proposed algorithm. First, the position of the person is estimated using state-of-the-art algorithms, here root-MUSIC. The estimated position is used to configure the RIS-phase shifts and the transmit and receive beamformers. For the recording of the first batch of $L$ pulses, the radar sends the sensing waveform initially precoded with equal weight on both paths towards the target. The echo signal is processed according to Step 1) to 4). Finally, the vital sign signals from both paths are analyzed in the frequency domain and the path weighting is updated as described in Step 5).   

\begin{figure}
    \centering
    \resizebox{\linewidth}{!}{
        \begin{tikzpicture}[
    block/.style = {rectangle, draw, minimum width=2.0cm, minimum height=1.0cm, align=center},
    picblock/.style = {rectangle, draw, minimum width=2.2cm, minimum height=1.3cm, black, fill= blue!20!white},
    arrow/.style = {->, thick}
]

\node[block, align=center] (A1) {Position\\Estimation};
\node[block, align=center, right=0.8cm of A1] (A2) {RIS \& Precoder\\Initialization};
\node[block, align=center, right=0.8cm of A2] (B) {Resource\\Allocation};
\node[block, align=center, right=1.2cm of B] (C) {Radar\\Measurement};
\node[block, align=center, below=3.0cm of C] (D) {
Signal\\Evaluation};

\draw[arrow] (A1) -- (A2);
\draw[arrow] (A2) -- (B);
\draw[arrow] (B) -- (C);

\node[picblock, below left=0.8cm and 0.2cm of C.south] (F1) {};
\node[picblock, below right=0.8cm and 0.2cm of C.south] (F2) {};

\draw[arrow] ([shift={(-0.5,0)}]A1.west)  -- (A1.west) node[pos=0, left, align=center]{Start};
\draw[arrow] (C.south)++(-0.65,0) -| ([shift={(0.65,0)}]F1.north);
\draw[arrow] (C.south)++(0.65,0) -| ([shift={(-0.65,0)}]F2.north);
\draw[arrow] (F1.south)++(0.67,0) -| ([shift={(-0.65,0)}]D.north);
\draw[arrow] (F2.south)++(-0.67,0) -| ([shift={(0.65,0)}]D.north);
\draw[arrow] (D.west) -| (B.south) 
    node[pos=0.75, above, align=center, rotate=90]{Precoding\\Correction};
\draw[arrow] (D.west) -| (A1.south) 
    node[pos=0.74, above, align=center, rotate=90]{Position\\Correction};
\draw[arrow] (D.east) -- ([shift={(0.5,0)}]D.east) node[pos=1, right, align=center]{Vital Sign\\Frequency\\Estimate};

\draw[arrow] (F1.south west) ++(0.2,0.2) -- ++(1.55,0) node[right] {$f$};
\draw[arrow] (F1.south west) ++(0.2,0.2) -- ++(0,1.0) node[pos=0.5, align=left, right] {VS\,Spectrum\\RIS Path};

\draw[arrow] (F2.south west) ++(0.2,0.2) -- ++(1.55,0) node[right] {$f$};
\draw[arrow] (F2.south west) ++(0.2,0.2) -- ++(0,1.0) node[pos=0.5, align=left, right] {VS\,Spectrum\\Direct Path};

\end{tikzpicture}
    }
    \caption{Overview of the proposed vital sign extraction algorithm.}
    \label{fig:Method}
\end{figure}
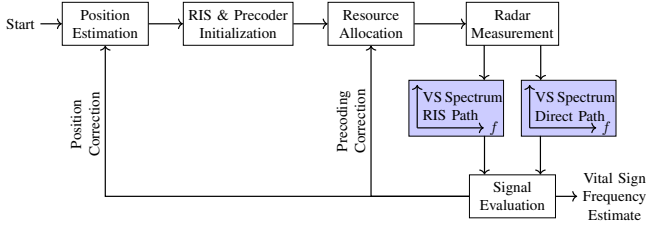

\section{Performance Evaluation and Experiment}
This section quantifies the benefits of exploiting two spatially diverse
sensing paths using the proposed algorithm strategies. First, we detail the hybrid simulation-experimental model used to emulate a small indoor patient scenario. Then, we discuss the obtained results to assess the performance of the proposed schemes.
 
\subsection{Simulation Setup}
A monostatic phased array radar with $M=5$ collocated transceivers is adopted. The radar operates at $f_\mathrm{c}=7.15\,\text{GHz}$ with a bandwidth of $B=0.5\,\text{MHz}$ sampled with $K=64$ fast-time samples. The sensing pulses are
repeated every $T_\mathrm{PRI}=250\,\text{ms}$, resulting in a
\mbox{$4\,\text{Hz}$} slow-time sampling rate. The total transmit power is $P_\text{total}=10\,\text{mW}$, and the receiver has a $10\,\text{dB}$ noise figure, which produces a thermal noise floor of $\sigma^2_\text{n}=-107\,\text{dBm}$.
\begin{figure}[t]
    \centering
    \resizebox{\linewidth}{!}{
        \begin{tikzpicture}
    \begin{axis}[%
      view       = {35}{20},      
      xlabel     = {Width in m},
      ylabel     = {Depth in m},
      zlabel     = {Height in m},
      enlargelimits = 0.05,
      grid=both,
      xmin=0, xmax=3.5,
      ymin=-1, ymax=1.4606,
      zmin=0, zmax=2,
      axis lines = left,
      ticklabel style = {font=\small},
      legend style = {font=\small, at={(1.05,0.8)}, anchor=north west},
      axis equal image,
    ]

      \addplot3[
        only marks,
        mark=*,
        mark size=2.5pt,
        black,
      ] coordinates {(0,0,1)}
        node[anchor=south] {Radar};
        \addlegendentry{Radar}

      \addplot3[
        only marks,
        mark=triangle*,
        mark options={scale=3,solid,orange}
      ] coordinates {(3,0,1)} node[anchor=north] {Patient};
      \addlegendentry{Patient}

      \def\ctr{(2.707, 1.4606, 1)}   
      \def\h{0.21}             
      \addplot3[
        very thick,
        green!60!black,
        forget plot
      ] coordinates {
        (2.707-\h, 1.4606, 1-\h)
        (2.707+\h, 1.4606, 1-\h)
        (2.707+\h, 1.4606, 1+\h)
        (2.707-\h, 1.4606, 1+\h)
        (2.707-\h, 1.4606, 1-\h)
      };
      \addlegendimage{only marks, mark=square*, mark options={draw=green!60!black, fill=white, thick}}
      \addlegendentry{RIS}
      \addplot3[
        only marks,
        mark=none,
        mark size=0pt,
        green!60!black,
        forget plot,
      ] coordinates {(2.707,1.4606,1.2)} node[anchor=south, rotate=-15] {RIS};

      \addplot3[
        blue,
        dashed,
        very thick,
      ] coordinates {(0,0,1) (3,0,1)};
      \addlegendentry{Direct Path}

      \addplot3[
        red,
        dashed,
        very thick,
        forget plot,
      ] coordinates {(0,0,1) (2.707,1.4606,1)};

      \addplot3[
        red,
        dashed,
        very thick,
      ] coordinates {(2.707,1.4606,1) (3,0,1)};
      \addlegendentry{RIS Path}

    \end{axis}
  \end{tikzpicture}
    }
    \caption{Considered simulation setup in which the patient's chest is facing the RIS.}
    \label{fig:SimulationSetup}
\end{figure}
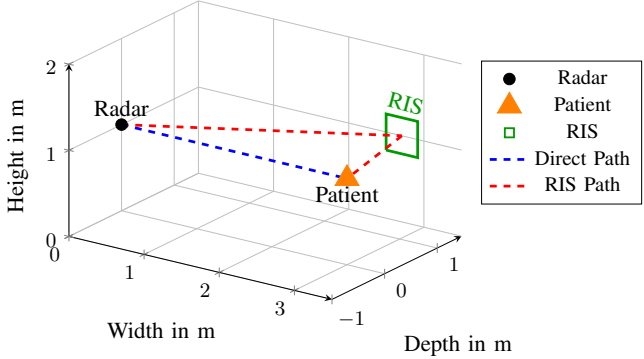
Fig.~\ref{fig:SimulationSetup} sketches the spatial
layout. The patient's chest is placed $3\,\text{m}$ in front of the radar and faces an RIS panel that is mounted $1.5\,\text{m}$ in front of the chest. The direct path therefore reaches the chest at an incidence angle of $78.75^{\circ}$ relative to its normal, whereas the RIS illuminates the chest from the front.
The target RCS is obtained from experimental measurements and used in the simulated RIS-assisted radar environment. 
The RIS consists of an $10\times10$ antenna grid with half wavelength spacing leading to a 21\,cm square panel. Each element applies an ideal frequency phase shift such that the incident wave from the radar is redirected and focused at the patient's chest. 
The large scale channel experiences Rician fading with a
$K$ factor of $10\,\text{dB}$, representative of a rich multipath indoor environment but still preserving a dominant direct component.

\subsection{Experimental Acquisition of Respiration Data}
To emulate realistic chest micro motions in the simulation, we collected an
experimental respiration data set.  Two stepped frequency continuous wave
(SFCW) radars (Walabot Developer, Vayyar) recorded simultaneously the breathing cycle of a proband as shown in Fig. \ref{fig:ExperimentPicture}.  
One radar was located orthogonally to the subject's chest at a
range of 2\,m and serves as the front view reference.
The second radar was successively repositioned on a circular arc of equal
radius, covering aspect angles from 11.25\(^{\circ}\) (close to the first radar) to 90\(^{\circ}\) in 8\(^{\circ}\) steps.  Each run lasted
60\,s so that both sensors captured identical respiration
cycles.

The raw radar signals were processed as described in Section \ref{VitalSignExtraction}.
The resulting angle dependent displacement traces constitute the ground truth respiration waveforms that are injected into the simulation.

\begin{figure}
    \centering
    \includegraphics[width=0.5\linewidth]{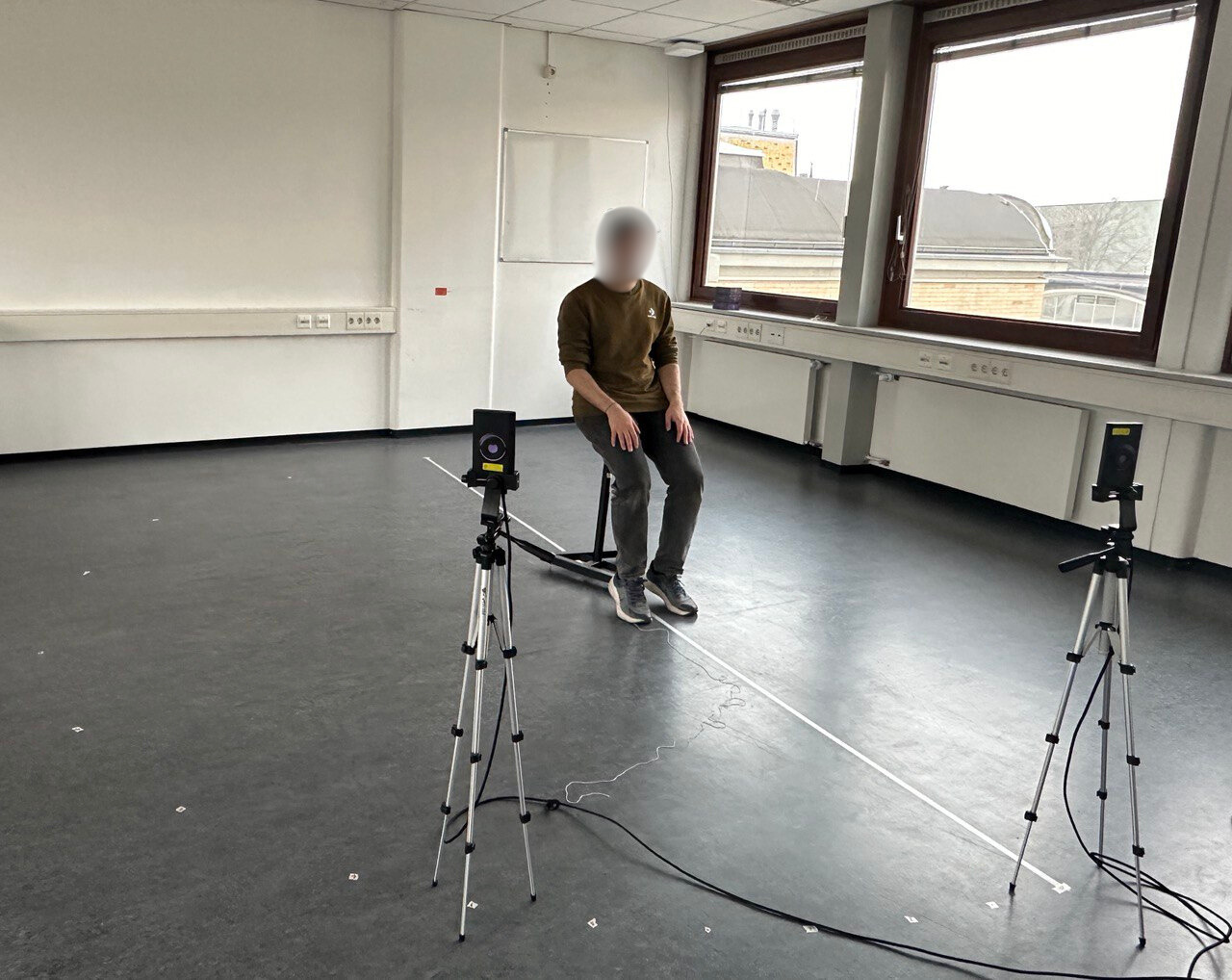}
    \caption{Setup for the experimental radar data acquisition involving two radars.}
    \label{fig:ExperimentPicture}
\end{figure}

\subsection{Simulation Results}

We now quantify how the three proposed sensing strategies exploit the
two spatially diverse paths provided by the RIS.  Unless stated
otherwise, the resource allocation parameter is set to $\gamma=0.5$ so that the
direct and RIS paths share the sensing resources equally.

\label{sec:sim_results}
\subsubsection{Time Domain Performance}
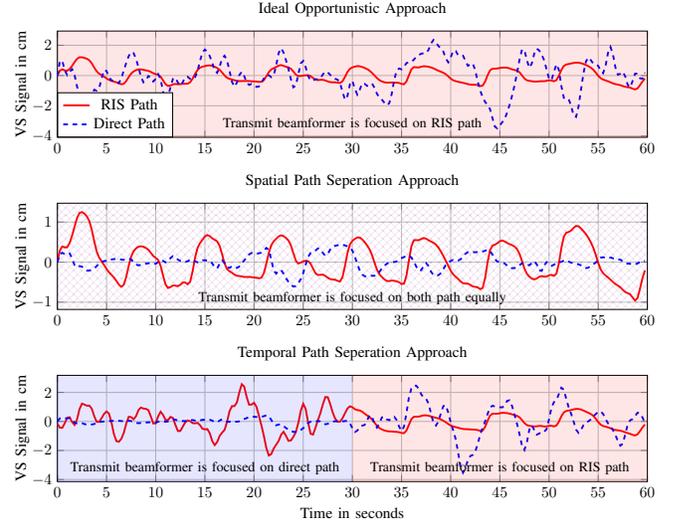
\begin{figure}
    \centering
    \resizebox{\linewidth}{!}{
        \begin{tikzpicture}
\begin{groupplot}[
    group style={
        group size=1 by 3,       
        vertical sep=1.5cm,        
    },
    scaled y ticks=false,
    yticklabel style={
      /pgf/number format/fixed,
      /pgf/number format/precision=2
    },
    yticklabel={\pgfmathparse{\tick*100}\pgfmathprintnumber{\pgfmathresult}},
    y label style={yshift=-0.5em},
    ylabel={VS Signal in cm},
    width=15cm,
    height=4cm,
    xmin=0, xmax=60,
    ylabel={VS Signal in cm},
    grid=both,
    legend style={at={(-0.0,-0.0)},anchor=south west},
]

\nextgroupplot[
    title={Ideal Opportunistic Approach}
]
\addplot[red, very thick]
    table [
        x=time_s,
        y=Gini_IRS_m,
        col sep=comma
    ] {figures/data/respiration_time_domain.csv};

\addplot[blue, dashed, very thick]
    table [
        x=time_s,
        y=Gini_Direct_m,
        col sep=comma
    ] {figures/data/respiration_time_domain.csv};
    
    \fill[red, opacity=0.1] (axis cs:0,-0.6) rectangle (axis cs:60,0.6);

    \node[anchor=south] at (axis cs:30,-0.04) {\small Transmit beamformer is focused on RIS path};

    \legend{RIS Path, Direct Path}
\nextgroupplot[
    title={Spatial Path Seperation Approach}, 
]
\addplot[red, very thick]
    table [
        x=time_s,
        y=PowerSplit_IRS_m,
        col sep=comma
    ] {figures/data/respiration_time_domain.csv};

\addplot[blue, dashed, very thick]
    table [
        x=time_s,
        y=PowerSplit_Direct_m,
        col sep=comma
    ] {figures/data/respiration_time_domain.csv};

    \fill[
        pattern=north east lines,
        pattern color=blue,
        opacity=0.3
    ] (axis cs:0,-0.6) rectangle (axis cs:60,0.6);

    \fill[
        pattern=north west lines,
        pattern color=red,
        line width = 10mm,
        opacity=0.3
    ] (axis cs:0,-0.6) rectangle (axis cs:60,0.6);

    \node[anchor=south] at (axis cs:30,-0.012) {\small Transmit beamformer is focused on both path equally};

\nextgroupplot[
    title={Temporal Path Seperation Approach}, 
    xlabel={Time in seconds},
]
\addplot[red, very thick]
    table [
        x=time_s,
        y=TimeSplit_IRS_m,
        col sep=comma
    ] {figures/data/respiration_time_domain.csv};

\addplot[blue, dashed, very thick]
    table [
        x=time_s,
        y=TimeSplit_Direct_m,
        col sep=comma
    ] {figures/data/respiration_time_domain.csv};

    \fill[blue, opacity=0.1] (axis cs:0,-0.6) rectangle (axis cs:30,0.6);
    \fill[red, opacity=0.1] (axis cs:30,-0.6) rectangle (axis cs:60,0.6);

    \node[anchor=south] at (axis cs:15,-0.04) {\small Transmit beamformer is focused on direct path};
    \node[anchor=south] at (axis cs:45,-0.04) {\small Transmit beamformer is focused on RIS path};

\end{groupplot}
\end{tikzpicture}
    }
    \caption{Vital sign signals obtained from the direct and RIS-assisted path for the three proposed sensing strategies. The resources are shared equally for the spatial and temporal separation, whereas the ideal opportunistic approach has prior knowledge about the optimal path.}
    \label{fig:ResultTime}
\end{figure}

Fig. \ref{fig:ResultTime} depicts the slow-time vital-sign signals
recovered from the direct (blue) and RIS (red) paths.  
For the ideal opportunistic strategy (top panel), the transmit
beam is steered entirely toward the RIS, resulting in a clean
sinusoidal displacement of approx. $2\,\mathrm{cm}$ peak-to-peak whose
period matches the breathing cycle.
In contrast, the direct path trace exhibits
noise like fluctuations.

The spatial separation strategy (middle panel) illuminates both links at the same time but with only half the SNR.
The RIS branch still exhibits a clear, sinusoidal breathing trace, while the direct branch no longer looks like random noise. Instead, it shows a weak, irregular displacement that mainly stems from shoulder motion rather than chest expansion.
With temporal separation (bottom panel) the radar alternates the full transmit power between the two beams.
Each path therefore cycles through high-quality segments and low-quality segments, resulting in the alternating appearance seen in the plot.

\begin{figure}
    \centering
    \resizebox{\linewidth}{!}{
        \begin{tikzpicture}
\begin{groupplot}[
    group style={
        group size=3 by 1,       
        vertical sep=2cm,        
    },
    width=5cm,
    height=5cm,
    xmin=0, xmax=0.3,
    grid=both,
    legend style={at={(-0.07,-0.2)},anchor=north},
    xlabel={Frequency in Hz},
]

\nextgroupplot[
    title={Ideal Opportunistic},
    ylabel={Spectral Estimate},
]
\addplot[red, very thick]
    table [
        x=freq_Hz,
        y=Gini_IRS,
        col sep=comma
    ] {figures/data/respiration_frequency_domain.csv};

\addplot[blue, dashed, very thick]
    table [
        x=freq_Hz,
        y=Gini_Direct,
        col sep=comma
    ] {figures/data/respiration_frequency_domain.csv};

\nextgroupplot[
    title={Spatial Seperation}, 
]
\addplot[red, very thick]
    table [
        x=freq_Hz,
        y=PowerSplit_IRS,
        col sep=comma
    ] {figures/data/respiration_frequency_domain.csv};

\addplot[blue, dashed, very thick]
    table [
        x=freq_Hz,
        y=PowerSplit_Direct,
        col sep=comma
    ] {figures/data/respiration_frequency_domain.csv};

\nextgroupplot[
    title={Temporal Seperation}, 
]
\addplot[red, very thick]
    table [
        x=freq_Hz,
        y=TimeSplit_IRS,
        col sep=comma
    ] {figures/data/respiration_frequency_domain.csv};

\addplot[blue, dashed, very thick]
    table [
        x=freq_Hz,
        y=TimeSplit_Direct,
        col sep=comma
    ] {figures/data/respiration_frequency_domain.csv};

\end{groupplot}
\end{tikzpicture}
    }
    \caption{Frequency spectra of the obtained direct (blue) and RIS-assisted (red) vital sign signal  for each sensing strategy.}
    \label{fig:ResultFreq}
\end{figure}
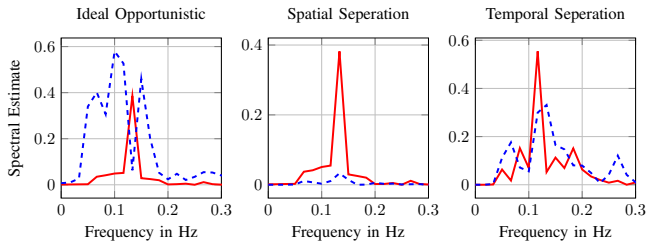
\subsubsection{Spectral accuracy}
The corresponding power spectra in Fig. \ref{fig:ResultFreq} confirm
these observations.  The opportunistic scheme
has a pronounced peak at $0.133\,\text{Hz}$ 
with high prominence over the noise floor, while the direct-path spectrum is almost flat.
Spatial separation produces a $0.133\,\text{Hz}$ peak on the RIS branch at $13\,$dB, whereas the direct branch shows a weaker and less prominent peak at the same location.
Temporal separation reveals clear peaks on both paths but with reduced
frequency resolution (wider main lobes) due to the shortened observation window.
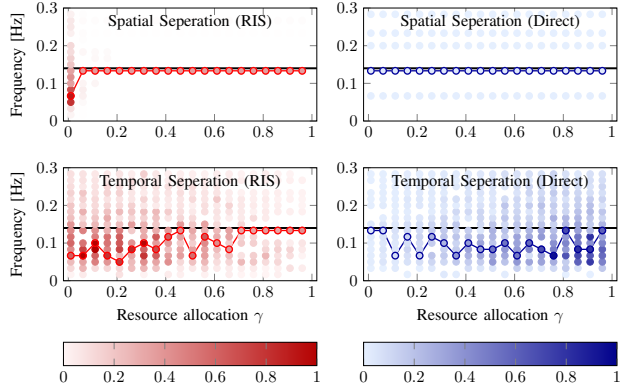
\begin{figure}
    \centering
        \hspace*{-2.2cm}
    \resizebox{1.2\linewidth}{!}{
        \usepgfplotslibrary{groupplots}
\centering
\begin{tikzpicture}
\begin{groupplot}[
    group style={
        group size=2 by 2,
        horizontal sep=1.0cm,
        vertical sep=1.0cm
    },
    ymin=0, ymax=0.3,
    xmin=-0.02, xmax=1.02,
    width=7cm,
    height=4cm,
]

\nextgroupplot[title={Spatial Seperation (RIS)}, 
title style={at={(0.5,0.9)},anchor=north}, ylabel={Frequency [Hz]}, colormap={custom}{rgb255(0cm)=(255,255,255) rgb255(1cm)=(180,0,0)}, point meta=explicit, point meta min=0, point meta max=1]
\addplot[
    only marks,
    mark=*,
    scatter,
    scatter src=explicit,
    scatter/use mapped color={draw=mapped color, fill=mapped color},
    on layer=axis background,
] table [x=Split, y=Frequency_Hz, meta=Power, col sep=comma] {figures/data/Power_IRS.csv};

\addplot[
    color=red,
    mark=o, 
    thick
] table [
    x=Split,
    y=Frequency_Hz,
    col sep=comma
] {figures/data/Maxima_Power_IRS.csv};
\addplot[
    black,
    very thick,
    domain=-0.02:1.02,
    samples=2
] {0.14};

\nextgroupplot[title={Spatial Seperation (Direct)}, title style={at={(0.5,0.9)},anchor=north}, colormap={custom}{rgb255(0cm)=(230,240,255) rgb255(1cm)=(0,0,139)}, point meta=explicit, point meta min=0, point meta max=1]
\addplot[
    only marks,
    mark=*,
    scatter,
    scatter src=explicit,
    scatter/use mapped color={draw=mapped color, fill=mapped color},
    on layer=axis background,
] table [x=Split, y=Frequency_Hz, meta=Power, col sep=comma] {figures/data/Power_direct.csv};

\addplot[
    color=blue!60!black,
    mark=o, 
    thick
] table [
    x=Split,
    y=Frequency_Hz,
    col sep=comma
] {figures/data/Maxima_Power_direct.csv};
\addplot[
    black,
    very thick,
    domain=-0.02:1.02,
    samples=2
] {0.14};

\nextgroupplot[
    title={Temporal Seperation (RIS)}, title style={at={(0.5,0.9)},anchor=north}, 
    ylabel={Frequency [Hz]},
    xlabel={Resource allocation $\gamma$},
    colormap={custom}{rgb255(0cm)=(255,245,245) rgb255(1cm)=(180,0,0)},
    colorbar horizontal,
    point meta=explicit, point meta min=0, point meta max=1
]
\addplot[
    only marks,
    mark=*,
    scatter,
    scatter src=explicit,
    scatter/use mapped color={draw=mapped color, fill=mapped color},
    on layer=axis background,
] table [x=Split, y=Frequency_Hz, meta=Power, col sep=comma] {figures/data/Time_IRS.csv};

\addplot[
    color=red,
    mark=o, 
    thick
] table [
    x=Split,
    y=Frequency_Hz,
    col sep=comma
] {figures/data/Maxima_Time_IRS.csv};
\addplot[
    black,
    very thick,
    domain=-0.02:1.02,
    samples=2
] {0.14};

\nextgroupplot[
    title={Temporal Seperation (Direct)}, title style={at={(0.5,0.9)},anchor=north}, 
    colormap={custom}{rgb255(0cm)=(230,240,255) rgb255(1cm)=(0,0,139)},
    xlabel={Resource allocation $\gamma$},
    colorbar horizontal,
    point meta=explicit, point meta min=0, point meta max=1
]
\addplot[
    only marks,
    mark=*,
    scatter,
    scatter src=explicit,
    scatter/use mapped color={draw=mapped color, fill=mapped color},
    on layer=axis background,
] table [x=Split, y=Frequency_Hz, meta=Power, col sep=comma] {figures/data/Time_direct.csv};

\addplot[
    color=blue!60!black,
    mark=o, 
    thick
] table [
    x=Split,
    y=Frequency_Hz,
    col sep=comma
] {figures/data/Maxima_Time_direct.csv};
\addplot[
    black,
    very thick,
    domain=-0.02:1.02,
    samples=2
] {0.14};

\end{groupplot}
\end{tikzpicture}
    }
    \caption{The colored markers represent peaks of the vital sign spectra for different values of $\gamma$ using the spatial and temporal separation strategy. The peaks are color coded with respect to their prominence. The black line indicates the true breathing frequency, whereas the red and blue lines indicate the progression of the maximum peak with respect to $\gamma$ for the RIS-assisted and direct path, respectively.}
    \label{fig:ResultFreqComp}
\end{figure}
\subsubsection{Impact of resource allocation}
Fig. \ref{fig:ResultFreqComp} shows that the dominant spectral peak locks onto the true breathing rate as soon as the RIS branch receives roughly four-fifths of the power ($\gamma \geq 0.2$) in spatial splitting, while the direct-path beam then exhibits only a faint replica caused by multipath leakage. In temporal splitting the RIS peak stabilizes only after it is allotted more than half of the slow-time slots ($\gamma \geq 0.7$), and slots aimed at the direct path never form a clear peak. Hence even a modest bias toward the RIS path yields a consistent respiration estimate, underscoring the benefit of RIS-enabled diversity.
\section{Conclusion}
We introduced the first radar architecture that leverages an RIS to create two controllable
sensing paths for contactless vital sign monitoring.
A hybrid analytic/experimental channel/RCS framework enabled investigating a closed-form dual path beamformer and three resource
allocation strategies.  
Simulations driven by real respiration traces show that the exploitation of the spatial diversity provided by the RIS enables more resilient vital sign monitoring for cases in which the person's chest does not directly face the radar. 
Because the RIS requires no RF chains and no fronthaul, the proposed
scheme offers a cost-effective alternative to multi-radar networks for in-home and clinical monitoring. 
\bibliographystyle{IEEEtran}
\bibliography{References}

\begin{thebibliography}{10}
\providecommand{\url}[1]{#1}
\csname url@samestyle\endcsname
\providecommand{\newblock}{\relax}
\providecommand{\bibinfo}[2]{#2}
\providecommand{\BIBentrySTDinterwordspacing}{\spaceskip=0pt\relax}
\providecommand{\BIBentryALTinterwordstretchfactor}{4}
\providecommand{\BIBentryALTinterwordspacing}{\spaceskip=\fontdimen2\font plus
\BIBentryALTinterwordstretchfactor\fontdimen3\font minus \fontdimen4\font\relax}
\providecommand{\BIBforeignlanguage}[2]{{%
\expandafter\ifx\csname l@#1\endcsname\relax
\typeout{** WARNING: IEEEtran.bst: No hyphenation pattern has been}%
\typeout{** loaded for the language `#1'. Using the pattern for}%
\typeout{** the default language instead.}%
\else
\language=\csname l@#1\endcsname
\fi
#2}}
\providecommand{\BIBdecl}{\relax}
\BIBdecl

\bibitem{paterniani_radar-based_2022}
G.~Paterniani, D.~Sgreccia, A.~Davoli, G.~Guerzoni, P.~Di~Viesti, A.~C. Valenti, M.~Vitolo, G.~M. Vitetta, and G.~Boriani, ``Radar-based monitoring of vital signs: A tutorial overview,'' \emph{Proceedings of the IEEE}, vol. 111, no.~3, pp. 277--317, 2023.

\bibitem{mercuri_enabling_2021}
M.~Mercuri, Y.~Lu, S.~Polito, F.~Wieringa, Y.-H. Liu, A.-J. Van Der~Veen, C.~Van~Hoof, and T.~Torfs, ``Enabling {Robust} {Radar}-{Based} {Localization} and {Vital} {Signs} {Monitoring} in {Multipath} {Propagation} {Environments},'' \emph{IEEE Transactions on Biomedical Engineering}, vol.~68, no.~11, pp. 3228--3240, 2021.

\bibitem{eder_sparsity_2022}
Y.~Eder, Z.~Liu, and Y.~C. Eldar, ``Sparse non-contact multiple people localization and vital signs monitoring via fmcw radar,'' in \emph{ICASSP 2023 - 2023 IEEE International Conference on Acoustics, Speech and Signal Processing (ICASSP)}, 2023, pp. 1--5.

\bibitem{wang_mmhrv_2021}
F.~Wang, X.~Zeng, C.~Wu, B.~Wang, and K.~R. Liu, ``{mmHRV}: {Contactless} {Heart} {Rate} {Variability} {Monitoring} {Using} {Millimeter}-{Wave} {Radio},'' \emph{IEEE Internet of Things Journal}, vol.~8, no.~22, pp. 16\,623--16\,636, 2021.

\bibitem{ren_vital_2021}
W.~Ren, F.~Qi, F.~Foroughian, T.~Kvelashvili, Q.~Liu, O.~Kilic, T.~Long, and A.~E. Fathy, ``\BIBforeignlanguage{en}{Vital {Sign} {Detection} in {Any} {Orientation} {Using} a {Distributed} {Radar} {Network} via {Modified} {Independent} {Component} {Analysis}},'' \emph{\BIBforeignlanguage{en}{IEEE Transactions on Microwave Theory and Techniques}}, vol.~69, no.~11, pp. 4774--4790, 2021.

\bibitem{eckrich2024fronthaul}
C.~Eckrich, A.~M. Zoubir, and V.~Jamali, ``Fronthaul-constrained distributed radar sensing,'' in \emph{Proc. IEEE Global Communications Conference}, Cape Town, South Africa, 2024, (in press). Preprint: arXiv:2409.17753.

\bibitem{tripathy_liquid_2025}
C.~Tripathy, R.~A.~I. Asyari, K.-Y. Lee, Y.-C. Chang, Y.-C. Hung, T.-L. Ting, D.~Teichmann, T.-S. Horng, and T.-H. Lin, ``Liquid {Crystal} {RIS} {Integrated} with {SIL} {Radar} for {NLOS} {Vital} {Sign} {Monitoring},'' in \emph{2025 {IEEE} {MTT}-{S} {International} {Microwave} {Biomedical} {Conference} ({IMBioC})}.\hskip 1em plus 0.5em minus 0.4em\relax Kaohsiung, Taiwan: IEEE, 2025, pp. 1--3.

\bibitem{mercuri_reconfigurable_2023}
M.~Mercuri, E.~Arnieri, R.~De~Marco, P.~Veltri, F.~Crupi, and L.~Boccia, ``Reconfigurable {Intelligent} {Surface}-{Aided} {Indoor} {Radar} {Monitoring}: {A} {Feasibility} {Study},'' \emph{IEEE Journal of Electromagnetics, RF and Microwaves in Medicine and Biology}, vol.~7, no.~4, pp. 354--364, 2023.

\bibitem{zhang_self-interference_2022}
J.~Zhang, F.~He, W.~Li, Y.~Li, Q.~Wang, S.~Ge, J.~Xing, H.~Liu, Y.~Li, and J.~Meng, ``\BIBforeignlanguage{en}{Self-{Interference} {Cancellation}: {A} {Comprehensive} {Review} from {Circuits} and {Fields} {Perspectives}},'' \emph{\BIBforeignlanguage{en}{Electronics}}, vol.~11, no.~2, 2022.

\bibitem{ahmed_all-digital_2015}
E.~Ahmed and A.~M. Eltawil, ``\BIBforeignlanguage{en}{All-{Digital} {Self}-{Interference} {Cancellation} {Technique} for {Full}-{Duplex} {Systems}},'' \emph{\BIBforeignlanguage{en}{IEEE Transactions on Wireless Communications}}, vol.~14, no.~7, pp. 3519--3532, 2015.

\end{thebibliography}
\end{document}